\let\csname equation*\endcsname\relax
\let\csname endequation*\endcsname\relax
\theoremstyle{plain}
\newtheorem*{thm*}{Theorem}
\newtheorem*{defi*}{Definition}
\newtheorem*{lem*}{Lemma}
\newtheorem*{cor*}{Corollary}
\newtheorem*{prop*}{Proposition}
\newcommand{\R}{\mathbb{R}}
\newcommand{\kB}{k_{\mathrm{B}}}
\begin{document}

\title{Entropy of mixing exists only for classical and quantum-like theories among the regular polygon theories}

\author{Ryo Takakura\thanks{takakura.ryo.27v@st.kyoto-u.ac.jp}}
\affil{Department of Nuclear Engineering 
	\\Kyoto University
	\\Kyoto daigaku-katsura, Nishikyo-ku, Kyoto, 615-8540, Japan}
\date{}

\maketitle
\begin{abstract}
	The thermodynamical entropy of a system which consists of different kinds of ideal gases is known to be defined successfully in the case when the differences are described by classical or quantum theory. Since these theories are special examples in the framework of generalized probabilistic theories (GPTs), it is natural to generalize the notion of thermodynamical entropy to systems where the internal degrees of particles are described by other possible theories. In this paper, we consider thermodynamical entropy of mixing in a specific series of theories of GPTs called the regular polygon theories, which can be regarded from a geometrical perspective as intermediate theories between a classical trit and a quantum bit with real coefficients. We prove that the operationally natural thermodynamical entropy of mixing does not exist in those inbetween theories, that is, the existence of the natural entropy results in classical and quantum-like theories among the regular polygon theories.
\end{abstract}

\section{Introduction}
The concept of entropy plays an important role in thermodynamics \cite{Callen_thermo,zemansky_thermo}. It is possible to calculate the thermodynamical entropy of a mixture of classically different kinds of particles (such like a mixture of nitrogens and oxygens), and similar ideas were applied by von Neumann to the case when the system was composed of particles with different quantum internal states \cite{von1955mathematical}. On the other hand, both classical and quantum theories can be classified as special cases of {\it generalized probabilistic theories} (GPTs), which have an advantage in that they describe operationally and intuitively physical experiments and can be considered as the most general framework of physics \cite{Gudder_stochastic,Araki_mathematical,hardy2001quantum,PhysRevA.75.032304,BARNUM20113,PhysRevA.84.012311,1751-8121-47-32-323001}. There have been researches which aim to introduce and investigate the concept of entropy in GPTs from informational perspectives \cite{1367-2630-12-3-033024,Short_2010,KIMURA2010175,KimuraEntropiesinGeneralProbabilisticTheoriesandTheirApplicationtotheHolevoBound}. In those researches, some kinds of entropy were defined in all theories of GPTs and their information-theoretical properties were investigated. Meanwhile, there have been also researches referring to the thermodynamical entropy in terms of the microcanonical or canonical formulation in GPTs \cite{Chiribella_2017,chiribella2016entanglement}, and researches referring to the thermodynamical entropy of mixing in GPTs \cite{1367-2630-19-4-043025,EPTCS195.4}. However, in those work, the entropy was only defined in or applied to some restricted theories of GPTs with special assumptions.

In GPTs, a state of the classical trit system is described by an element of a triangle-shaped state space. On the other hand, a state of the simplest quantum system, the two-level system or qubit, is expressed by an element of the Bloch ball. In the study of GPTs, the ball is often substituted by a disk which is its two-dimensional counterpart \cite{doi:10.1063/1.4998711}. Since a disc-shaped state space is known to describe a qubit with real coefficients, or a ``quantum-like" bit, one can introduce theories whose state spaces are in the shapes of regular polygons as describing intermediate theories between a classical trit and a quantum-like bit in the framework of GPTs. In fact, those theories (the {\it regular polygon theories}) are known not to always satisfy the assumptions imposed in the previous studies. It seems natural to ask how entropy of mixing behaves in those theories.

In this paper, we consider thermodynamical entropy of mixing in the regular polygon theories. It is proved that the operationally natural thermodynamical entropy of a mixture of ideal particles with different internal states described by one of the regular polygon theories in GPTs exists if and only if the theory is either classical or quantum-like. More precisely, we demonstrate that the thermodynamical entropy of mixing satisfying conditions imposed in \cite{1367-2630-19-4-043025}, where the concrete operational construction of the entropy was given as von Neumann did under the assumption of the existence of semipermeable membranes, does not exist in all the regular polygon theories except for classical and quantum-like ones.

This paper is organized as follows. In section \ref{sec2}, we give a short review of GPTs, and introduce the regular polygon theories and thermodynamically natural entropy of mixing. We state our main theorem and its brief proof in section \ref{sec3}, and conclude this paper and note some future works in section \ref{sec4}.

\section{Fundamental concepts}
\label{sec2}

\subsection{GPTs}
\label{2a}
Here, we introduce briefly the mathematical framework of GPTs according mainly to \cite{1751-8121-47-32-323001,kimura2010physical}. In a physical experiment, we prepare a system being measured, measure a physical quantity, obtain one of the several outcomes (we only consider measurements with finite outcomes in this paper), and repeat this procedure to obtain statistics about the outcomes \cite{Araki_mathematical}. GPTs describe physical experiments in the most general way shown in the following.

In each theory of GPTs, preparation procedures are represented by {\it states}, and the set of all states of a GPT is called the {\it state space} of the theory. To describe the concept of probability mixtures, a state space $\Omega$ should be a convex set embedded in a vector space $V$, that is, if $\omega_{1}, \omega_{2}\in\Omega$, then for all $0\le p\le1$, $p\omega_{1}+(1-p)\omega_{2}\in\Omega$, which represents the mixture of two states $\omega_{1}$ and $\omega_{2}$ with probability weights $p, 1-p$. In this paper, we assume that every state space $\Omega$ is finite dimensional and closed, and embedded in $V=\R^{d}$. The extreme points of $\Omega$ are called ${\it pure states}$, and the other elements of $\Omega$ are called ${\it mixed states}$. 

To introduce the notion of measurements in GPTs, we define ${\it effects}$. Let $\Omega$ be the state space of some GPT. An effect $e$ is an affine function mapping a state $\omega\in\Omega$ into some value $e(\omega)\in[0, 1]$, which gives the probability of getting a specific outcome in the system prepared in $\omega$. Note that the affinity of effects ensures the concept of probability mixture of states. The set of all effects is called the ${\it effect\ space}$ and denoted by $E(\Omega)$, that is, $E(\Omega)=\{e\in V^{*}\mid \forall\omega\in\Omega,\ 0\le e(\omega)\le1\}$, where $V^{*}$ is the dual space of $V$. Note that we follow the {\it no-restriction hypothesis} in this paper \cite{PhysRevA.81.062348}. $E(\Omega)$ is also a convex set with its natural convex combinations in $V^{*}$, and there exists a special effect in $E(\Omega)$ called the ${\it unit\ effect}$. It is denoted by $u$, and satisfies $u(\omega)=1$ for all $\omega\in\Omega$ (in this paper, we do not consider unnormalized states, which are not mapped to 1 by $u$). It is easy to check $u$ is unique in $E(\Omega)$ and is an extreme effect of $E(\Omega)$, and if $e\in E(\Omega)$ then also $u-e\in E(\Omega)$ (moreover, if $e$ is an extreme effect, then $u-e$ is also an extreme effect). A ${\it measurement}$ (with $l$ outcomes) is defined by a set of effects $\{e_{1}, e_{2}, \cdots, e_{l}\}$ such that $e_{1}+e_{2}+\cdots+e_{l}=u$, where $e_{i}(\omega)$ represents the probability of getting the $i$th outcome in the system whose state is $\omega$ for each $i=1, 2, \cdots, l$.

A set of $m$ states $\{\omega_{1}, \omega_{2}, \cdots, \omega_{m}\}$ is called ${\it perfectly}$ ${\it distinguishable}$ if and only if there exists a measurement $\{e_{1}, e_{2}, \cdots, e_{m}\}$ such that $e_{i}(\omega_{j})=\delta_{ij},\ i,j=1,\ 2,\ \cdots,\ m$. In general, we can not identify the state of a system by a single measurement. However, for perfectly distinguishable states, there exists a measurement by which we can detect perfectly in which state the system is prepared.

\subsection{Regular polygon theories}
\label{2b}
Let $d$, the dimension of the vector space $V=\mathbb{R}^{d}$, equal to three. Then, both states and effects are represented by three-dimensional Euclidean vectors, and it is possible to take the affine actions of effects on states as Euclidean inner products of those vectors. We can introduce in accord with \cite{1367-2630-13-6-063024} the {\it regular polygon theories}, whose state spaces are in the shapes of regular polygons.

The regular polygon theories are composed of the $n$-gon theories. A GPT with its state space $\Omega$ in $\mathbb{R}^{3}$ is called the $n$-${\it gon\ theory}$ if and only if $\Omega$ is the regular polygon with $n$ sides, that is, $\Omega$ is the convex hull of the following $n$ pure(extreme) states
\begin{align*}
	\omega_{i}^{n}=
	\left(
	\begin{array}{c}
		r_{n}\cos({\frac{2\pi i}{n}})\\
		r_{n}\sin({\frac{2\pi i}{n}})\\
		1
	\end{array}
	\right),\ \ &r_{n}=\sqrt{\frac{1}{\cos({\frac{\pi}{n}})}}, \ \ i=0, 1, \cdots, n-1\ \ (n:\mbox{finite}).
\end{align*}
In the case when $n=\infty$, the state space is described by an unit disk, whose extreme points are 
\begin{align*}
	\omega_{\theta}^{\infty}=
	\left(
	\begin{array}{c}
		\cos\theta\\
		\sin\theta\\
		1
	\end{array}
	\right),\ \ \theta\in[0, 2\pi).
\end{align*}
We express the $n$-gon state space by $\Omega_{n}$, and do not consider the case when $n=1, 2$. The corresponding effect space $E(\Omega_{n})$ for $n=3, 4, \cdots, \infty$ is the convex hull of $0$, the unit effect $u={}^{t}(0, 0, 1)$, and the other extreme effects
\begin{align*}
	&e_{i}^{n}=\frac{1}{2}
	\left(
	\begin{array}{c}
		r_{n}\cos({\frac{(2i-1)\pi}{n}})\\
		r_{n}\sin({\frac{(2i-1)\pi}{n}})\\
		1
	\end{array}
	\right),\ \ i=0, 1, \cdots, n-1\ \ (n:\mbox{even})\ ;\\
	&e_{i}^{n}=\frac{1}{1+r_{n}^{2}}
	\left(
	\begin{array}{c}
		r_{n}\cos({\frac{2i\pi}{n}})\\
		r_{n}\sin({\frac{2i\pi}{n}})\\
		1
	\end{array}
	\right),\ \ \overline{e_{i}^{n}}\equiv u-e_{i}^{n},\ \ i=0, 1, \cdots, n-1\ \ (n:\mbox{odd})\ ;\\
	&e_{\theta}^{\infty}=\frac{1}{2}
	\left(
	\begin{array}{c}
		\cos\theta\\
		\sin\theta\\
		1
	\end{array}
	\right),\ \ \theta\in[0, 2\pi)\ \ \ (n=\infty).
\end{align*}
Remark that we have to consider the extreme effects $\{\overline{e^{n}_{i}}\}_{i=0}^{n-1}$ when $n$ is odd, since each $\overline{e^{n}_{i}}=u-e^{n}_{i}$ is an extreme effect as noted in the previous subsection and the set $\{\overline{e^{n}_{i}}\}_{i=0}^{n-1}$ does not coincide with the set $\{e^{n}_{i}\}_{i=0}^{n-1}$ for odd $n$
\footnote{It seems that the $n$-gon theory with even $n$ has a simpler structure of extreme effects than odd $n$, although the former is known to have weaker geometrical symmetry called {\it weak self-duality} between the state cone and the dual cone than the latter which have {\it (strong) self-duality} \cite{Barnum2008teleportation, Barnum2013}.}. The case when $n=3$ corresponds to the classical trit case, since any mixed state can be decomposed uniquely into the convex combination of the three pure states in that theory. By contrast, the case when $n=\infty$ can be considered as an analog of the qubit case \cite{doi:10.1063/1.4998711}. In fact, considering the usual Bloch representation of a qubit, we can see that the round state space represents some equatorial plane of the Bloch ball, the space of a qubit with real coefficients. Therefore, we call this theory quantum-like. 

Next, we characterize perfectly distinguishable states in the $n$-gon theory. 
\begin{figure}[h]
	\hfill
	\begin{minipage}[b]{0.3\linewidth}
		\centering
		\includegraphics[scale=0.4]{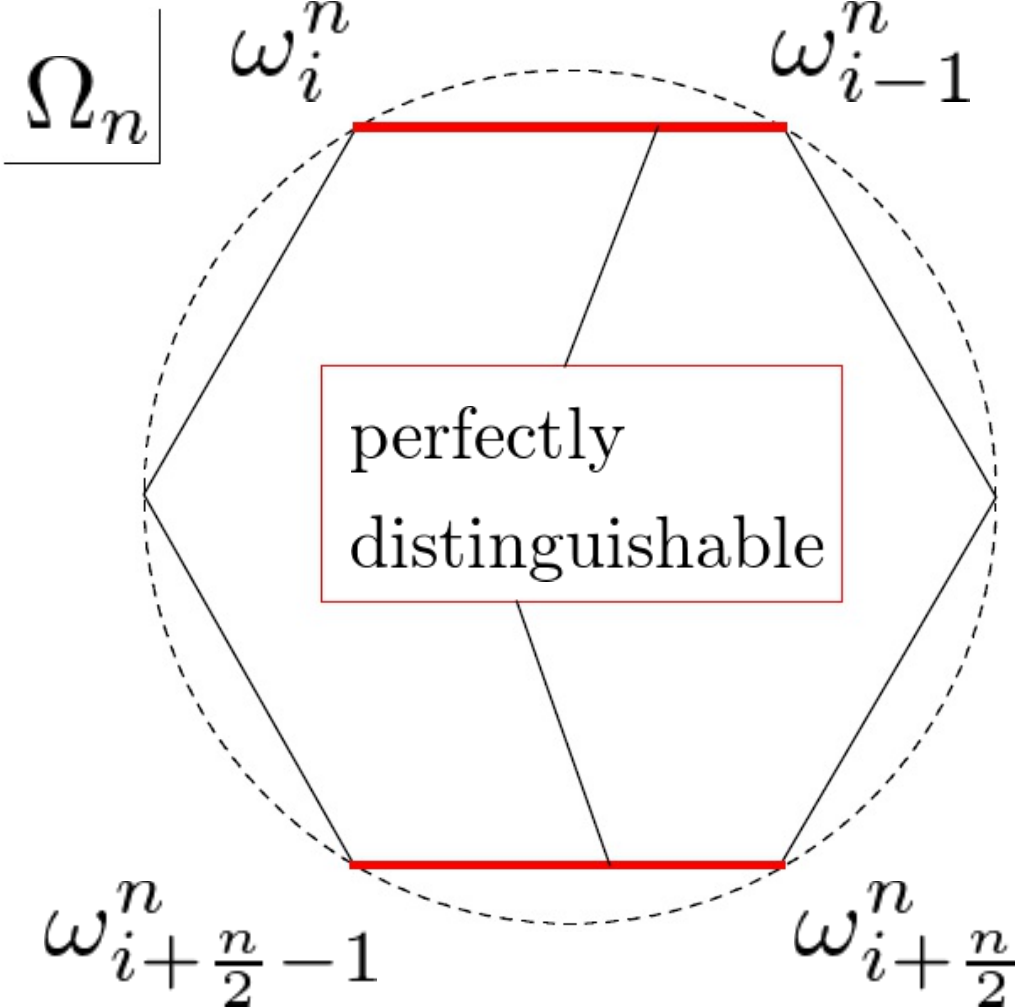}
		\subcaption{$n$ is an even number.}
	\end{minipage}
	\hfill
	\begin{minipage}[b]{0.3\linewidth}
		\centering
		\includegraphics[scale=0.4]{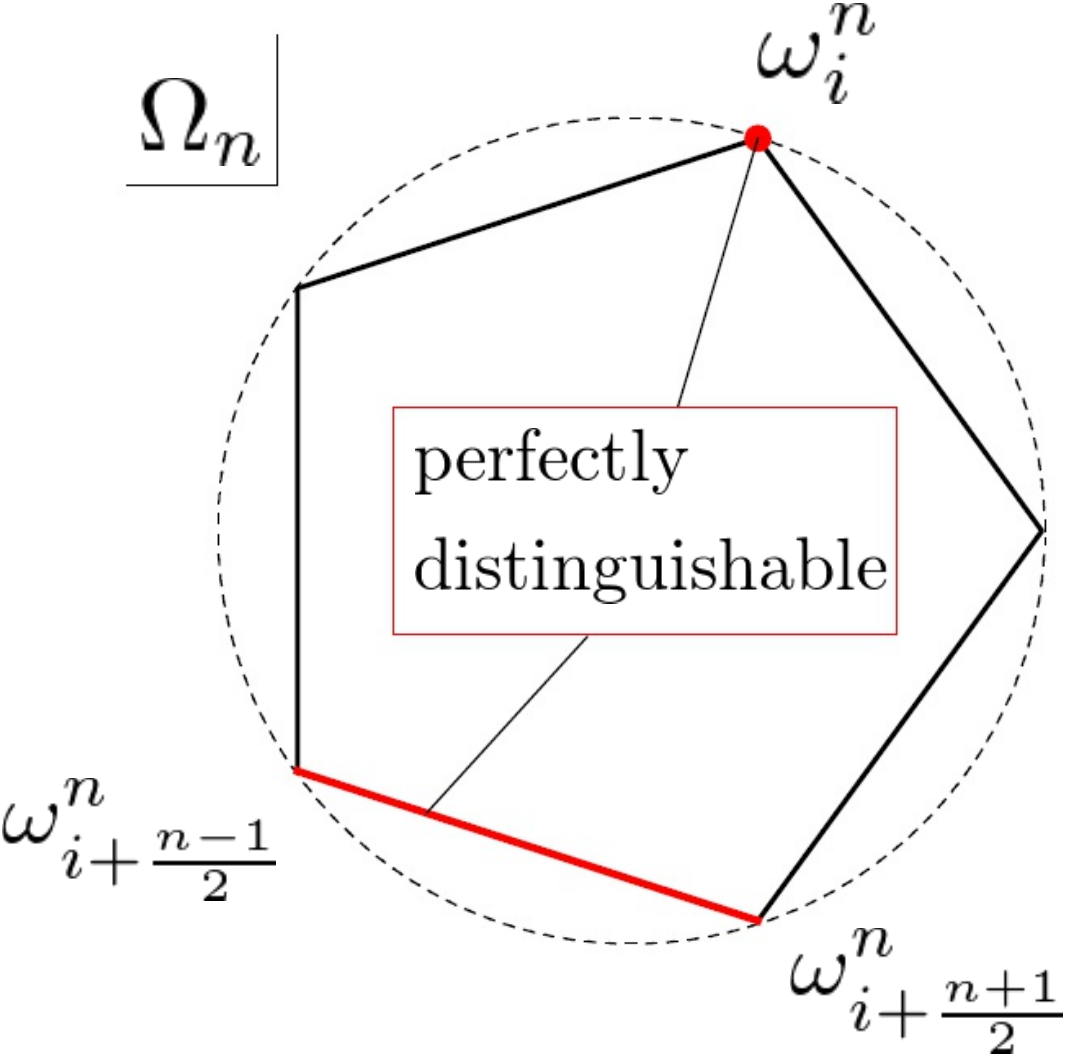}
		\subcaption{$n$ is an odd number.}
	\end{minipage}
	\hfill
	\begin{minipage}[b]{0.3\linewidth}
		\centering
		\includegraphics[scale=0.4]{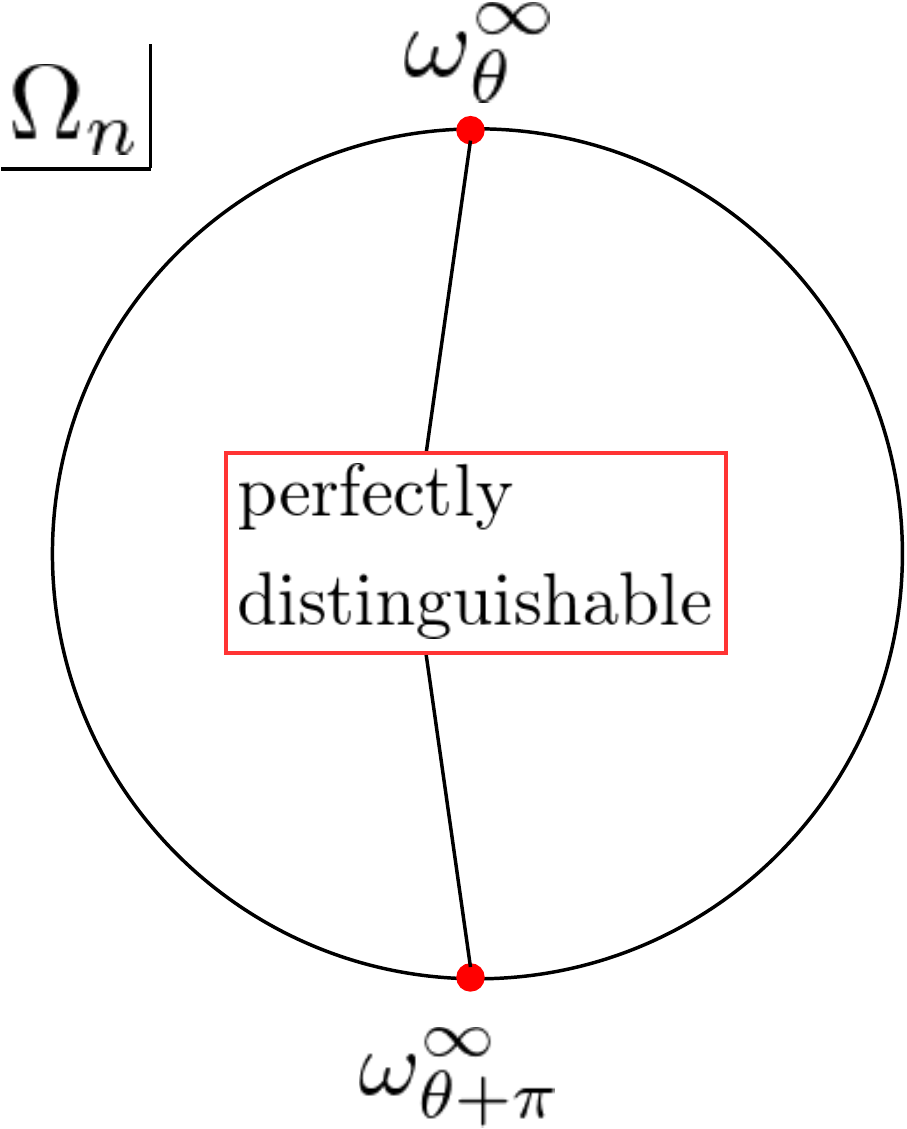}
		\subcaption{$n=\infty$.}
	\end{minipage}
	\hfill
	\null
	\caption{Pairs of perfectly distinguishable states in the $n$-gon state space.}
	\label{fig:n-gon}
\end{figure}
We first consider the case when $n$ is an even number greater than two. Calculating the inner products of pure effects and pure states, we obtain 
\begin{align*}
	e_{i}^{n}(\omega_{i}^{n})=e_{i}^{n}(\omega_{i-1}^{n})=1,\ \ e_{i}^{n}(\omega_{i+\frac{n}{2}-1}^{n})=e_{i}^{n}(\omega_{i+\frac{n}{2}}^{n})=0.
\end{align*}
These equations indicate that any state in $\Omega_{n}^{[i-1,\ i]}$ is perfectly distinguishable from any state in $\Omega_{n}^{[i+\frac{n}{2}-1,\ i+\frac{n}{2}]}$, where we define
\begin{align*}
\Omega_{n}^{[k-1,\ k]}=\{\omega\in\Omega_{n}\mid \omega=p\omega_{k-1}^{n}+(1-p)\omega_{k}^{n},\ 0\le p\le1\},
\end{align*}
since the measurement $\{e_{i}^{n},\ u-e_{i}^{n}\}$ distinguishes perfectly those two states. For odd $n\ (\ge3)$, we obtain
\begin{align*}
e_{i}^{n}(\omega_{i}^{n})=1,\ \ e_{i}^{n}(\omega_{i+\frac{n-1}{2}}^{n})=e^{n}_{i}(\omega_{i+\frac{n+1}{2}}^{n})=0.
\end{align*}
Hence, $\omega_{i}^{n}$ and an arbitrary state in $\Omega_{n}^{[i+\frac{n-1}{2},\ i+\frac{n+1}{2}]}$ are perfectly distinguishable.
Finally, when $n=\infty$, 
\begin{align*}
e_{\theta}^{\infty}(\omega_{\theta}^{\infty})=1, \ \ e_{\theta}^{\infty}(\omega_{\theta+\pi}^{\infty})=0
\end{align*}
hold, so there is only one perfectly distinguishable state for each pure state (see figure \ref{fig:n-gon}).

\subsection{Entropy of mixing in the framework of GPTs}
\label{2c}
In this part, we consider the thermodynamical entropy of mixing in a system composed of ideal gases with different internal degrees of freedom described by a GPT. In thermodynamics, it is well known that a mixture of several classically distinct ideal gases, such like a mixture of ideal hydrogens and nitrogens, causes an increase of entropy. The amount of increase by the mixture can be calculated under the assumption of the existence of semipermeable membranes which distinguish perfectly those particles. We assume in a similar way that if the internal states $\omega_{1}, \omega_{2}, \cdots, \omega_{l}$ described by a GPT are perfectly distinguishable, then there exist semipermeable membranes which can identify completely a state among them without disturbing every $\omega_{j}$ $(j=1, 2, \cdots, l)$. 

We consider ideal gases in thermal equilibrium with its temperature $T$, volume $V$, and $N$ particles, and do not focus on the mechanical part of the particles in the following. All of these $N$ particles are in the same internal state $\omega=\sum_{i=1}^{l}p_{i}\omega_{i}$, where $\{\omega_{1}, \omega_{2}, \cdots, \omega_{l}\}$ is a perfectly distinguishable set of states, and $\forall i,\  p_{i}\ge0,\ \mbox{and}\  \sum_{i=1}^{l}p_{i}=1$, meaning that this system is composed of the mixture of $l$ different kinds of particles whose internal states are $\omega_{1}, \omega_{2}, \cdots, \omega_{l}$ with probability weights $\{p_{1}, p_{2}, \cdots, p_{l}\}$. We note again that in this paper, classical species of particles are also regarded as the internal states of them. In classical thermodynamics, thermodynamical entropy is calculated by constructing concrete thermodynamical operations such as isothermal or adiabatic quasistatic operations. We follow this doctrine of thermodynamics also in GPTs that thermodynamical entropy, especially thermodynamical entropy of mixing, should be operationally-derived quantity. In fact, as shown in \cite{1367-2630-19-4-043025}, our assumption of the existence of semipermeable membranes makes it possible to realize concrete thermodynamical operations to calculate the thermodynamical entropy of mixing of the system mentioned above in the same way as von Neumann did when the internal degrees of freedom were quantum \cite{von1955mathematical}. Strictly speaking, it has been demonstrated operationally in \cite{1367-2630-19-4-043025} that the thermodynamical entropy of mixing in the system is
\begin{equation}
	\label{eq1}
	S(\omega)=\sum_{i=1}^{l}p_{i}S(\omega_{i})-\sum_{i=1}^{l}p_{i}\log p_{i}\ ,
\end{equation}
where $S(\sigma)$ means the per-particle thermodynamical entropy of mixing in the system which consists of particles in the same state $\sigma$, and we set the Boltzmann constant $\kB=1$ (also $0\log0=0$). In the process of deriving $\eqref{eq1}$, the additivity and extensivity of the thermodynamical entropy, and the continuity of $S$ with respect to states are assumed. The latter one is needed in order to apply $\eqref{eq1}$ to arbitrary states with arbitrary probability weights, while the operational derivation of $\eqref{eq1}$ has been given only when each $p_{i}N$ is the number of particles in the state $\omega_{i}$ and thus each $p_{i}$ is rational. We impose additional assumption that the entropy of any pure state equals to zero, that is, $S(\sigma)=0$ whenever $\sigma$ is a pure state.

\section{Main result}
\label{sec3}
Our main result is in the following form.
\begin{thm*}
	Consider a system in thermal equilibrium composed of ideal gases whose internal states are all  described by one of the elements of the $n$-gon ($n\ge3$) state space $\Omega_{n}$. The (per-particle) thermodynamical entropy of mixing $S:\Omega_{n}\rightarrow\mathbb{R}$ satisfying $\eqref{eq1}$ exists if and only if $n=3\ \mbox{or}\ \infty$, that is, the state space is classical or quantum-like.
\end{thm*}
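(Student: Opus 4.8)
Write $h(p)=-p\log p-(1-p)\log(1-p)$. The hypotheses say exactly that $S$ is continuous on $\Omega_n$, vanishes on pure states, and satisfies $S(\sum_i p_i\omega_i)=\sum_i p_iS(\omega_i)-\sum_i p_i\log p_i$ for every perfectly distinguishable family $\{\omega_i\}$; since for $n\ge4$ such a family has at most two members (as recorded after Figure~\ref{fig:n-gon}), the whole content is the requirement $S(p\alpha+(1-p)\beta)=pS(\alpha)+(1-p)S(\beta)+h(p)$ for every perfectly distinguishable pair $\{\alpha,\beta\}$, and the question is whether the prescribed values are mutually consistent. For $n=3$ I would simply \emph{define} $S(\omega)=-\sum_{i=0}^{2}q_i\log q_i$ using the unique decomposition $\omega=\sum_i q_i\omega_i^{3}$ into pure states; every two‑element perfectly distinguishable decomposition of $\omega$ is a grouping of this one, so the identity reduces to the grouping law of Shannon entropy and holds. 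For $n=\infty$ the only perfectly distinguishable pairs are antipodal pure states, so a state of the unit disc of Euclidean norm $r$ has the \emph{unique} decomposition $\omega=\tfrac{1+r}{2}\omega_\theta^{\infty}+\tfrac{1-r}{2}\omega_{\theta+\pi}^{\infty}$, and setting $S(\omega)=h\!\big(\tfrac{1+r}{2}\big)$ gives a continuous $S$ which is automatically consistent.

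\textbf{Polygon set‑up and the even case.} Now suppose $4\le n<\infty$ and that a valid $S$ exists. The defining identity is affine in $S$, and affine symmetries of $\Omega_n$ preserve purity and perfect distinguishability, so averaging $S$ over the dihedral symmetry group of $\Omega_n$ again yields a valid $S$; I may therefore assume $S$ is symmetric. Let $c_0={}^{t}(0,0,1)$ be the centre. If $n$ is even, I would first show $S\equiv0$ on $\partial\Omega_n$: for $\omega\in\partial\Omega_n$ its antipode $\bar\omega=2c_0-\omega$ lies on the opposite edge (or is the opposite vertex), so $\{\omega,\bar\omega\}$ is perfectly distinguishable; since $S(\omega)=S(\bar\omega)$ by symmetry, $c_0=\tfrac12\omega+\tfrac12\bar\omega$, and $c_0=\tfrac12\omega_0^{n}+\tfrac12\omega_{n/2}^{n}$ gives $S(c_0)=\log2$, we obtain $\log2=S(\omega)+\log2$, i.e.\ $S(\omega)=0$. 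Next, for $\omega$ in a small neighbourhood $U$ of $c_0$ and for each of the $n/2$ pairs of parallel opposite edges there is a perfectly distinguishable decomposition $\omega=q^{(i)}(\omega)\,a_i+(1-q^{(i)}(\omega))\,b_i$ with $a_i,b_i$ on those two edges and $q^{(i)}$ the affine ``height'' coordinate transverse to them; since $a_i,b_i\in\partial\Omega_n$ this forces $S(\omega)=h\!\big(q^{(i)}(\omega)\big)$ for every such $i$, hence $h(q^{(0)})\equiv h(q^{(1)})$ on $U$. But $q^{(0)},q^{(1)}$ are affine with linearly independent gradients (the transverse directions of consecutive edge‑pairs differ by the angle $2\pi/n\neq 0,\pi$), so $\omega\mapsto(q^{(0)}(\omega),q^{(1)}(\omega))$ maps $U$ onto an open set, whereas $\{(a,b):h(a)=h(b)\}=\{a=b\}\cup\{a=1-b\}$ is nowhere dense --- a contradiction, so no valid $S$ exists for even $n\ge4$.

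\textbf{The odd case.} Fix the reflection axis $\ell$ of $\Omega_n$ through the vertex $\omega_0^{n}$; it meets the opposite edge $E_0=\Omega_n^{[(n-1)/2,(n+1)/2]}$ at its midpoint $m$. Because $\omega_0^{n}$ is perfectly distinguishable from every state of $E_0$, $\{\omega_0^{n},m\}$ is a perfectly distinguishable pair, so on the segment from $\omega_0^{n}$ to $m$ one has $S\big((1-t)\omega_0^{n}+tm\big)=t\,S(m)+h(t)$. On the other hand $\ell$ passes through several ``distinguishing diagonals'' --- chords $[\omega_i^{n},\omega_j^{n}]$ joining perfectly distinguishable vertices --- either through the midpoint of the unique one symmetric about $\ell$, or through the common crossing point of a reflection‑pair of them; at each such point $w$ one knows $S(w)=h(\nu_w)$ with $\nu_w$ the geometrically determined dividing ratio of $w$ on the diagonal. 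Writing $w=(1-t_w)\omega_0^{n}+t_w m$, the two evaluations of $S(w)$ give $S(m)=\big(h(\nu_w)-h(t_w)\big)/t_w$, which therefore must be independent of $w$. For $n=5$, taking $w$ the midpoint of $[\omega_1^{5},\omega_4^{5}]$ and $w'$ the crossing of $\ell$ with $[\omega_1^{5},\omega_3^{5}]$ (equivalently $[\omega_2^{5},\omega_4^{5}]$), the two values of $S(m)$ come out (with $\kB=1$, natural logarithm) to be about $0.073$ and $0.155$ --- unequal --- so no valid $S$ exists for $n=5$; the same scheme, with even more crossing points available, applies to every odd $n\ge5$.

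\textbf{Expected main obstacle.} The genuinely delicate part is the odd case. One must check that for \emph{every} odd $n\ge5$ the axis $\ell$ really meets at least two distinguishing diagonals in the described fashion, and --- the crux --- that the resulting closed‑form quantities $\big(h(\nu_w)-h(t_w)\big)/t_w$ (whose ingredients are trigonometric in $n$) are genuinely distinct for $n\ge5$, while they coincide trivially for $n=3$ (where $\ell$ meets no such diagonal) consistently with the Shannon grouping law. The two existence constructions and the even‑$n$ argument are, by comparison, routine once the pertinent decompositions are written down.
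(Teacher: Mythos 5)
Your ``if'' direction and your even-$n$ argument are sound. The $n=3$ and $n=\infty$ constructions are the same as the paper's (one caveat: your reduction of \eqref{eq1} to two-element families for $n\ge4$ rests on the claim that no three states are mutually perfectly distinguishable there; this is true but not actually recorded in the text, and it is needed to certify the $n=\infty$ construction). For even $n$ you take a genuinely different route from the paper: the paper picks a single point $\omega_{\mathrm{P}}$ on the intersection of two chords joining perfectly distinguishable vertices and compares the two mixing ratios by the sine theorem, forcing $(\cos\frac{2\pi}{n}/\cos\frac{\pi}{n})^{2}=1$; you instead symmetrize $S$ over the dihedral group (legitimate, since \eqref{eq1} is affine in $S$ and the symmetries preserve purity and perfect distinguishability), deduce $S\equiv0$ on $\partial\Omega_{n}$ and $S(c_{0})=\log2$ from antipodal pairs, and then get a contradiction because $H\circ q^{(0)}=H\circ q^{(1)}$ on an open neighbourhood of $c_{0}$ is impossible for affine coordinates with independent gradients. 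This is correct, arguably cleaner, and trades the paper's explicit trigonometry for the two auxiliary lemmas.

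The odd case, however, is where the proposal has a genuine gap, and it is the hard part of the theorem. Your two test points on the symmetry axis for $n=5$ are exactly the paper's $\omega_{\mathrm{R}}$ (the midpoint of the diagonal $[\omega_{1}^{5},\omega_{4}^{5}]$) and $\omega_{\mathrm{Q}}$ (the crossing with $[\omega_{1}^{5},\omega_{3}^{5}]$), and your numbers $0.073$ and $0.155$ agree with \eqref{eq:6} and \eqref{eq:5} evaluated at $\alpha=\sin\frac{\pi}{10}$; so for $n=5$ your argument is the paper's argument, verified numerically. But you prove nothing for odd $n\ge7$: asserting that ``the same scheme applies'' is precisely the step that carries the content. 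The paper closes this by computing the dividing ratios for general odd $n$ in the appendix ($q/p=\tfrac12(2\cos\frac{\pi}{n}-1)$ and $z/w=1\mp2\sin\frac{\pi}{2n}$, with signs depending on $n\bmod4$), obtaining the two closed forms \eqref{eq:5} and \eqref{eq:6} for $S(\omega_{\mathrm{A}})$ as functions of $\alpha=\sin\frac{\pi}{2n}$, and then exhibiting that their difference is nonzero for all relevant $\alpha$ (figure \ref{fig:graph}); it is this uniform-in-$n$ comparison that your proposal defers. Until you carry out the analogous general-$n$ computation and show the two evaluations of $S(m)$ genuinely disagree for every odd $n\ge5$, the theorem is established only for even $n$ and for $n=5$.
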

\begin{proof}
	For $n=3$, as stated in the previous section, any $\omega\in\Omega_{3}$ is decomposed uniquely into perfectly distinguishable pure states as $\omega=p\omega_{0}^{3}+q\omega_{1}^{3}+(1-p-q)\omega_{2}^{3}$, where $\omega_{i}^{3}\ (i=0, 1, 2)$ are the three pure states in $\Omega_{3}$ and $\{p, q, 1-p-q\}$ is a probability weight. In this settings, we define $S$ as 
\begin{align*}
S(\omega)=-p\log p-q\log q-(1-p-q)\log (1-p-q).
\end{align*}
	This $S$ gives the well-defined entropy satisfying \eqref{eq1}. Similarly, when $n=\infty$, any state has only one decomposition into perfectly distinguishable (pure) states except for the central state of $\Omega_{\infty}$ (the maximally mixed state). For states which are not maximally mixed, we define $S$ as 
\begin{align*}
S(\omega)=H(p),
\end{align*}
	where we decompose a non-maximally-mixed $\omega\in\Omega_{\infty}$ as $\omega=p\omega_{\theta}^{\infty}+(1-p)\omega_{\theta+\pi}^{\infty}\ (0\le p\le1)$ and $H(p)=-p\log p-(1-p)\log(1-p)$ is the {\it 1-bit Shannon entropy}. We can apply this $S$ to the maximally mixed state, for the probability weights do not depend on the way of decompositions and they are always $\{\frac{1}{2},\frac{1}{2}\}$. Therefore, we can define successfully the thermodynamical entropy $S$ which meets \eqref{eq1} for $n=3, \infty$. In the following, we prove the only if part.

	The case when $n=4$ was proved in \cite{1367-2630-19-4-043025}, so we only consider $n\ge5$. At first, we assume $n$ is an even number, and consider the state $\omega_{\mathrm{P}}$ represented in figure \ref{fig:proof_even}, 
	\begin{figure}[h]
		\centering
		\includegraphics[scale=0.43]{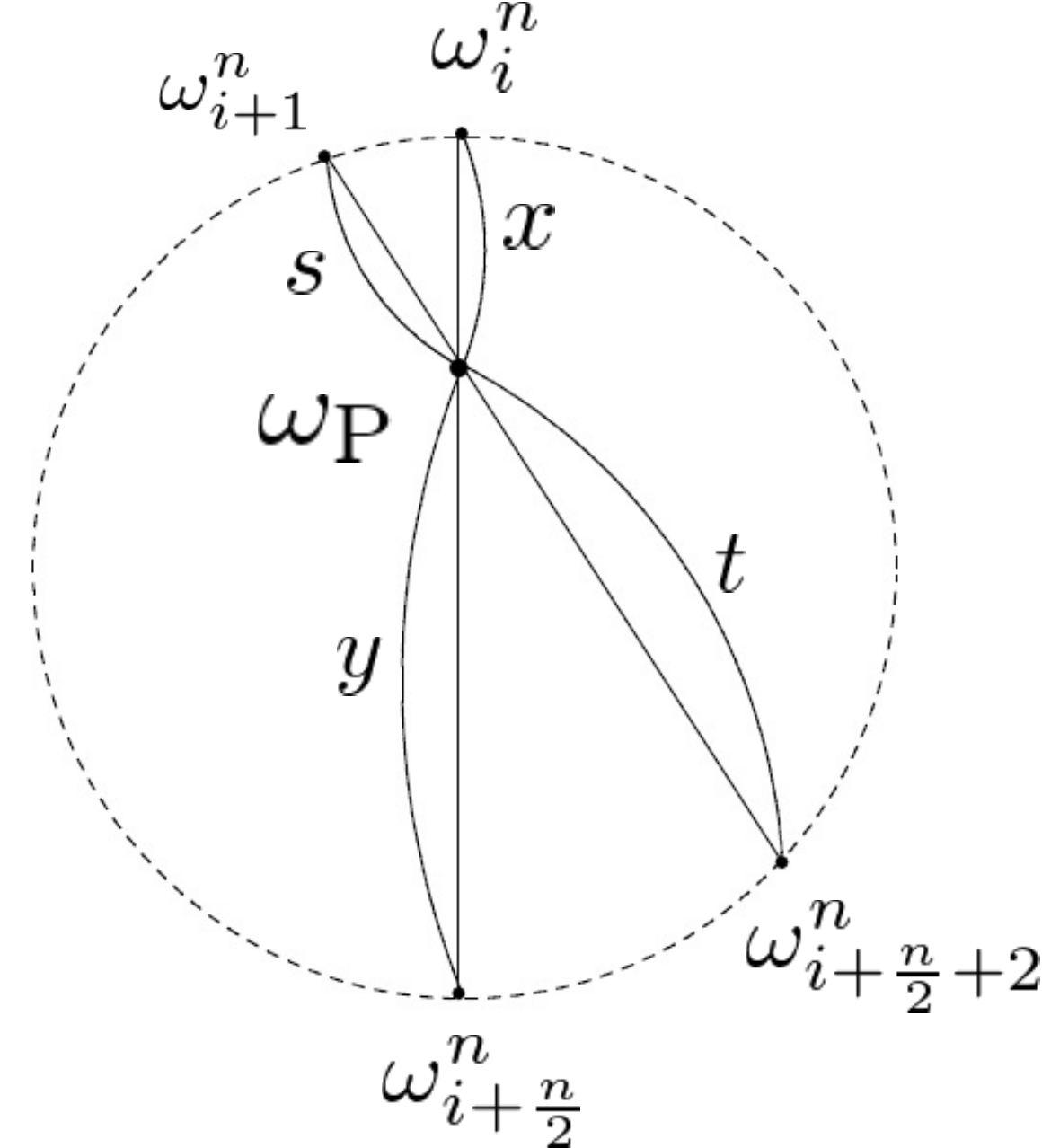}
		\caption{Illustration of the state $\omega_{\mathrm{P}}$.}
		\label{fig:proof_even}
	\end{figure}
	\\
	that is,
	\begin{align*}
		\omega_{\mathrm{P}}=\frac{y}{x+y}\omega_{i}^{n}+\frac{x}{x+y}\omega_{i+\frac{n}{2}}^{n}=\frac{t}{s+t}\omega_{i+1}^{n}+\frac{s}{s+t}\omega_{i+\frac{n}{2}+2}^{n},
	\end{align*}
	where $x, y, s, t$ are all nonnegative, and $x\le y$ and $s\le t$ as shown in figure \ref{fig:proof_even}. Note that $\{\omega_{i}^{n},\ \omega_{i+\frac{n}{2}}^{n}\}$ and $\{\omega_{i+1}^{n},\ \omega_{i+\frac{n}{2}+2}^{n}\}$ are two perfectly distinguishable pairs of pure states. From the observations in the previous section, we obtain two forms of the thermodynamical entropy of mixing:
	\begin{equation}
		\label{eq2}
		S(\omega_{\mathrm{P}})=H(\frac{x}{x+y})=H(\frac{s}{s+t}),
	\end{equation}
	which means
\begin{align*}
	\frac{x}{x+y}=\frac{s}{s+t}
\end{align*}
	because $x\le y$ and $s\le t$. On the other hand, by simple calculations (see appendix) we obtain
\begin{align*}
	\frac{s}{s+t}=\frac{x}{x+(\frac{\cos\frac{2\pi}{n}}{\cos\frac{\pi}{n}})^{2}\ y}.
\end{align*}
	It follows that  $(\frac{\cos\frac{2\pi}{n}}{\cos\frac{\pi}{n}})^{2}=1$ from these two equations, and because for even $n$, $(\frac{\cos\frac{2\pi}{n}}{\cos\frac{\pi}{n}})^{2}=1$ if and only if $n=\infty$ ($\cos(\frac{\pi}{n})=1$), the entropy \eqref{eq2} has been proved to be ill-defined.

	Next, we consider the case when $n$ is an odd number greater than three. We define the state $\omega_{\mathrm{A}}$ as $\omega_{\mathrm{A}}=\frac{1}{2}(\omega_{i+\frac{n-1}{2}}^{n}+\omega_{i+\frac{n+1}{2}}^{n})$,
	and consider two states $\omega_{\mathrm{Q}}$ and $\omega_{\mathrm{R}}$ shown in figure \ref{fig:proof_odd}, where $j=\frac{n+1}{4}\ \mbox{or}\ \frac{n-1}{4}$ corresponding to the case when $n\equiv3$ or $n\equiv1$ (mod 4) respectively.
\begin{figure}[!h]
	\hfill
	\begin{minipage}[b]{0.48\linewidth}
		\centering
		\includegraphics[scale=0.43]{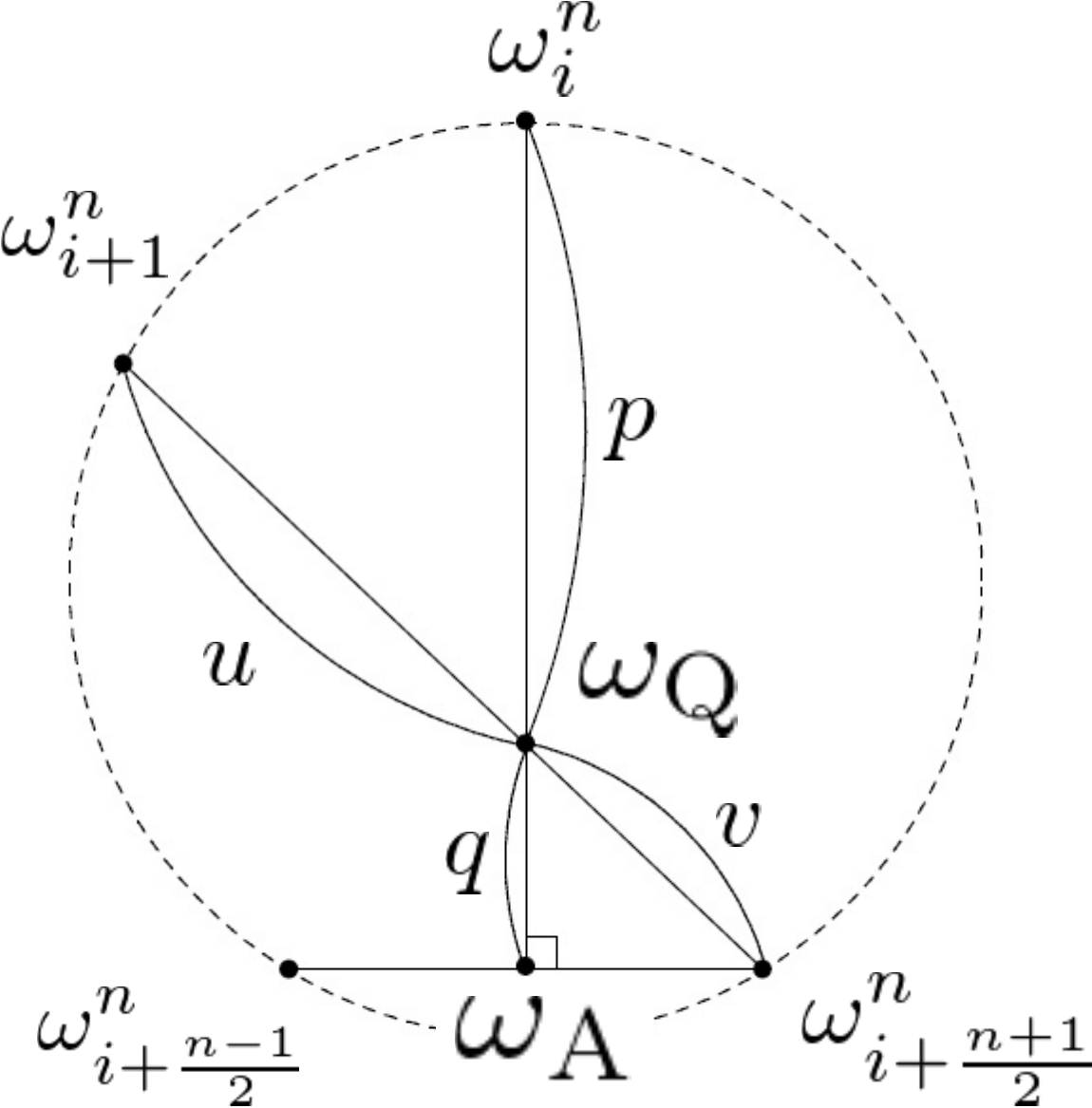}
		\subcaption{Illustration of the state $\omega_{\mathrm{Q}}$.}
		\label{fig:proof_odd1}
	\end{minipage}
	\hfill
	\begin{minipage}[b]{0.48\linewidth}
		\centering
		\includegraphics[scale=0.43]{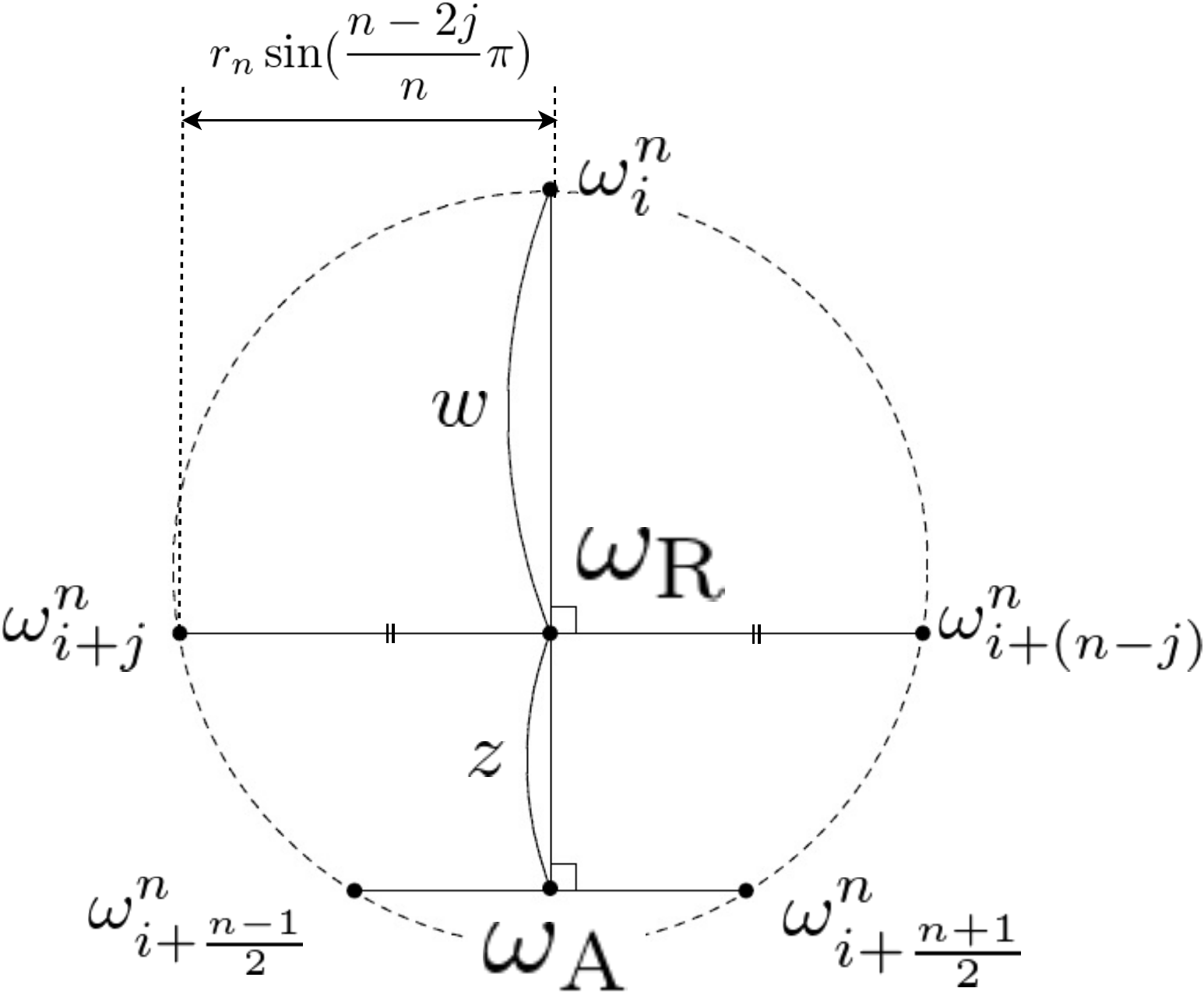}
		\subcaption{Illustration of the state $\omega_{\mathrm{R}}$.}
		\label{fig:proof_odd2}
	\end{minipage}
	\hfill
	\null
	\caption{Illustration of the states $\omega_{\mathrm{Q}}$ and $\omega_{\mathrm{R}}$.}
	\label{fig:proof_odd}
\end{figure}
Note that $\{\omega_{i}^{n},\ \omega_{\mathrm{A}}\}$ and $\{\omega_{i+1}^{n},\ \omega_{i+\frac{n+1}{2}}^{n}\}$ in figure \ref{fig:proof_odd1}, and $\{\omega_{i+j}^{n},\ \omega_{i+(n-j)}^{n}\}$ in figure \ref{fig:proof_odd2} are perfectly distinguishable pairs of states. Then, 
	\begin{align*}
		S(\omega_{\mathrm{Q}})=\frac{p}{p+q}S(\omega_{\mathrm{A}})+H(\frac{p}{p+q})=H(\frac{u}{u+v}),
	\end{align*}
	and
	\begin{align*}
		S(\omega_{\mathrm{R}})=\frac{w}{w+z}S(\omega_{\mathrm{A}})+H(\frac{w}{w+z})=H(\ \frac{1}{2}\ )
	\end{align*}
	hold. We assume that the entropies of the two states $\omega_{\mathrm{Q}},\ \omega_{\mathrm{R}}$ are well-defined (so is $\omega_{\mathrm{A}}$). Then,
	\begin{align}
		S(\omega_{\mathrm{A}})&=\frac{p+q}{p}\{H(\frac{u}{u+v})-H(\frac{p}{p+q})\}\label{eq:3} \\
		&=\frac{w+z}{w}\{H(\frac{1}{2})-H(\frac{w}{w+z})\}\label{eq:4}
	\end{align}
	holds. By elementary geometrical calculations (see appendix) and letting $\alpha=\sin\frac{\pi}{2n}$, we obtain
	\begin{align*}
&\frac{p+q}{p}\{H(\frac{u}{u+v})-H(\frac{p}{p+q})\}\\
&\qquad\qquad\qquad=2\alpha^{2}\log2+\frac{1-4\alpha^{2}}{2}\log(1-4\alpha^{2})-(1-2\alpha^{2})\log(1-2\alpha^{2})
\end{align*}
and
\begin{align*}
\frac{w+z}{w}\{H(\frac{1}{2})-H(\frac{w}{w+z})\}=(1\mp2\alpha)\log(1\mp2\alpha)-(2\mp2\alpha)\log(1\mp\alpha),
\end{align*}
that is, 
\begin{align}
		S(\omega_{\mathrm{A}})&=2\alpha^{2}\log2+\frac{1-4\alpha^{2}}{2}\log(1-4\alpha^{2})-(1-2\alpha^{2})\log(1-2\alpha^{2})\label{eq:5}\\
		&=(1\mp2\alpha)\log(1\mp2\alpha)-(2\mp2\alpha)\log(1\mp\alpha)\label{eq:6},
	\end{align}
	where the upper and lower signs correspond to the case of $n\equiv3$ and $n\equiv1$ (mod 4) respectively.
	\begin{figure}[h]
		\hfill
		\begin{minipage}[b]{0.48\linewidth}
			\includegraphics[scale=0.41]{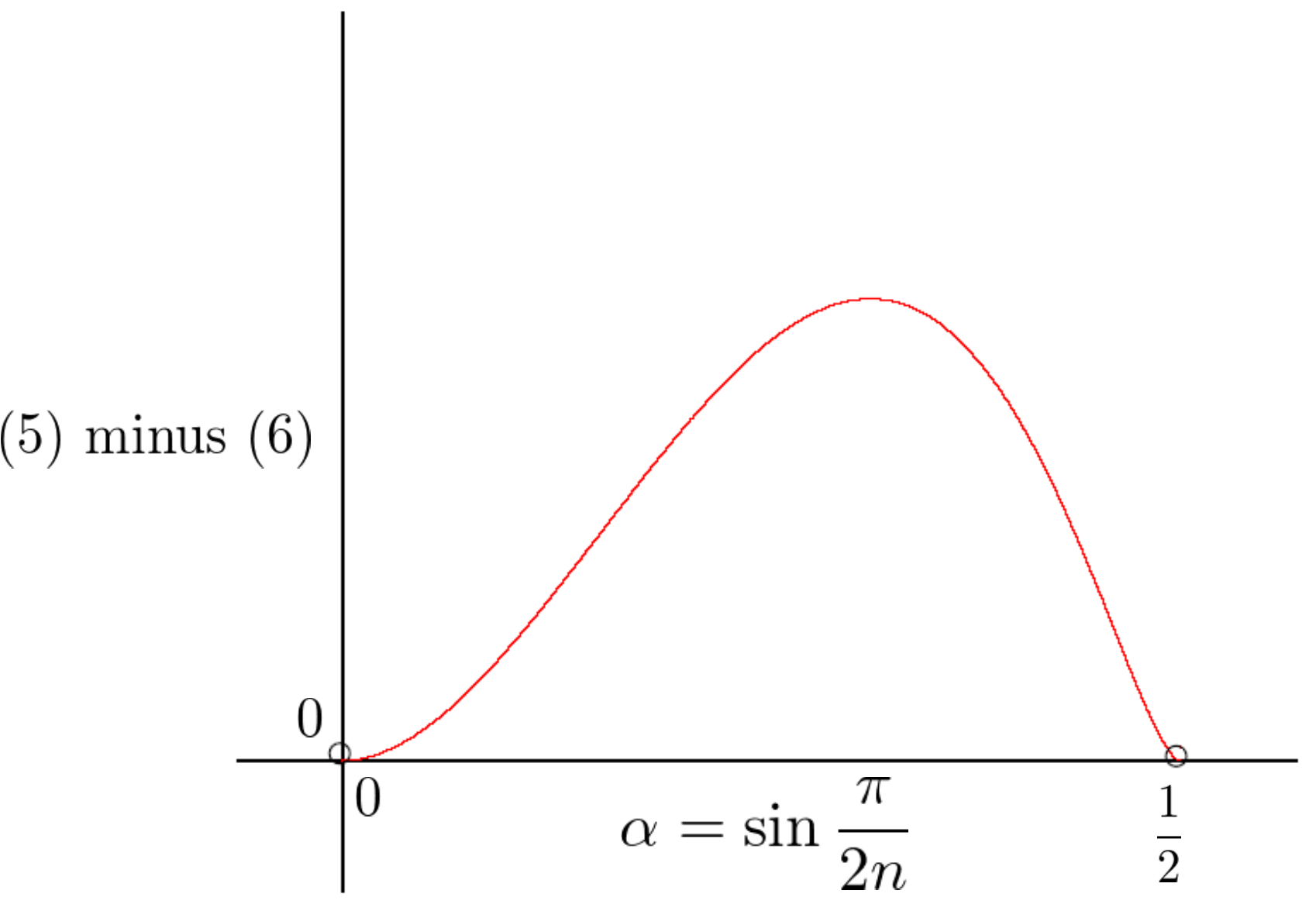}
			\subcaption{$n\equiv3$.}
			\label{fig:graph_equiv3}
		\end{minipage}
		\hfill
		\begin{minipage}[b]{0.48\linewidth}
			\centering
			\includegraphics[scale=0.4]{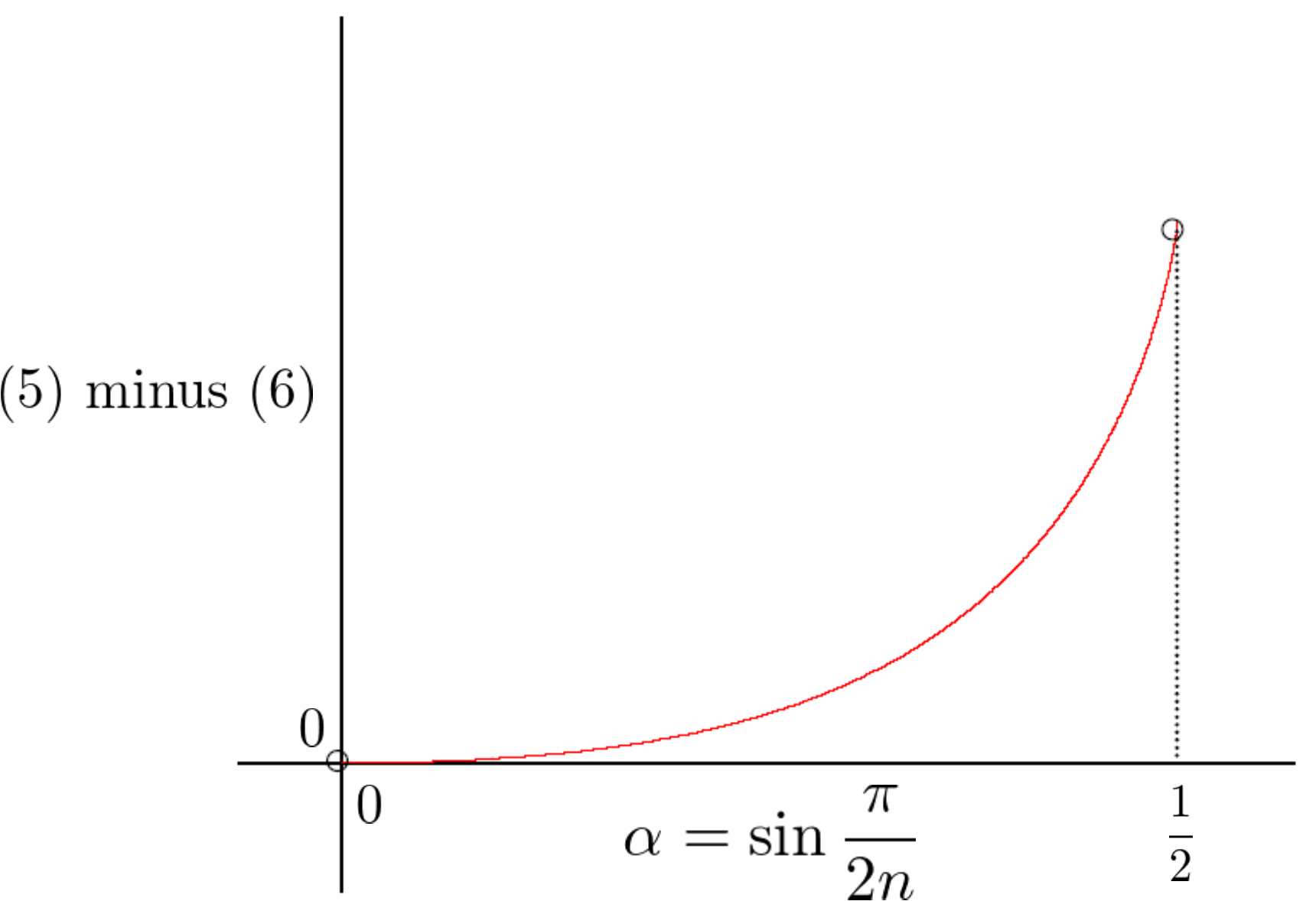}
			\subcaption{$n\equiv1$.}
			\label{fig:graph_equiv1}
		\end{minipage}
		\hfill\null
		\caption{The difference of the two values \eqref{eq:5} and \eqref{eq:6} of $S(\omega_{\mathrm{A}})$.}
		\label{fig:graph}
	\end{figure}
	The differences between \eqref{eq:5} and \eqref{eq:6} in the case of $n\equiv3$ and $n\equiv1$ are displayed in	figure \ref{fig:graph_equiv3} and \ref{fig:graph_equiv1} respectively, and we can see that the two forms of $S(\omega_{\mathrm{A}})$ shown in \eqref{eq:5} and \eqref{eq:6} do not agree with each other. In conclusion, it has been proved that if $n\neq3,\ \infty$, then there exists some state whose thermodynamical entropy of mixing is ill-defined.
\end{proof}

\section{Conclusions and future prospects}
\label{sec4}
Overall, although we have only considered the $n$-gon theories embedded in $\mathbb{R}^{3}$, we showed that only classical and quantum-like theories allowed the entropy \eqref{eq1} to be consistent. What makes our claim more reasonable is that we can see the ill-defined values of entropy become well-defined if $n=3,\ \infty$ in our proof. For example, when $n$ is an odd number, $\alpha=\sin\frac{\pi}{2n}$ equals to $\frac{1}{2}$ or $0$ if $n$ equals to three or infinite, respectively, and two values \eqref{eq:5} and \eqref{eq:6} coincide with each other in these cases (see fig. \ref{fig:graph}).

Note that similar results were obtained in \cite{1367-2630-19-4-043025}, where it was assumed that any state could be represented as a convex combination of perfectly distinguishable pure states. However, a state of the $n$-gon theory is not always represented by a convex combination of perfectly distinguishable pure states. For instance, we can see from fig. \ref{fig:n-gon} that the state $\omega_{\mathrm{A}}$ in fig. \ref{fig:proof_odd1} or fig. \ref{fig:proof_odd2} can not be decomposed into perfectly distinguishable pure states. Thus, the regular polygon theories generally do not satisfy the assumption in the previous study \cite{1367-2630-19-4-043025}, and our result is the one about the exsistence of well-defined thermodynamical entropy in such a broader class of theories where ``spectral decompositions" of states are not generally possible.

obtained in such a broader class of theories.

Further research is required to reveal if we can obtain the same results in higher dimensional cases (in GPTs, higher dimensional classical theories are known to be described generally by simplexes, but higher dimensional quantum theories have more complicated structures \cite{KIMURA2003339, Bengtsson2013}). Moreover, the proof of our main theorem indicates that the entropy discussed above is defined successfully in other theories where the probability coefficients obtained when a state is decomposed into perfectly distinguishable states are unique even though the state space is neither classical nor quantum. This means that we need to impose additional conditions on the entropy to remove those``unreasonable" theories, which is also a future problem.

\section*{Acknowledgement}
The author would like to thank Takayuki Miyadera (Kyoto University) for many helpful comments, and also wishes to thank anonymous referees for valuable remarks.

\appendix
\section{Detailed proof of the theorem}
\label{appendix}
In this appendix, we illustrate how to derive
\begin{equation}
\label{eq:app1}
\frac{s}{s+t}=\frac{x}{x+(\frac{\cos\frac{2\pi}{n}}{\cos\frac{\pi}{n}})^{2}\ y},
\end{equation}
and
\begin{equation}
\label{eq:app2}
\begin{aligned}
S(\omega_{\mathrm{A}})&=2\alpha^{2}\log2+\frac{1-4\alpha^{2}}{2}\log(1-4\alpha^{2})-(1-2\alpha^{2})\log(1-2\alpha^{2})\\
&=(1\mp2\alpha)\log(1\mp2\alpha)-(2\mp2\alpha)\log(1\mp\alpha)
\end{aligned}
\end{equation}
in the proof of the main theorem (remember that $\alpha=\sin{\frac{\pi}{2n}}$).

To derive the former, we apply sine theorem to figure \ref{fig:proof_even_app}, and thus obtain
\begin{figure}[h]
	\hfill
	\begin{minipage}[b]{0.45\linewidth}
		\centering
		\includegraphics[scale=0.43]{proof_even1.pdf}
	\end{minipage}
	\hfill
	\begin{minipage}[b]{0.45\linewidth}
		\centering
		\includegraphics[scale=0.43]{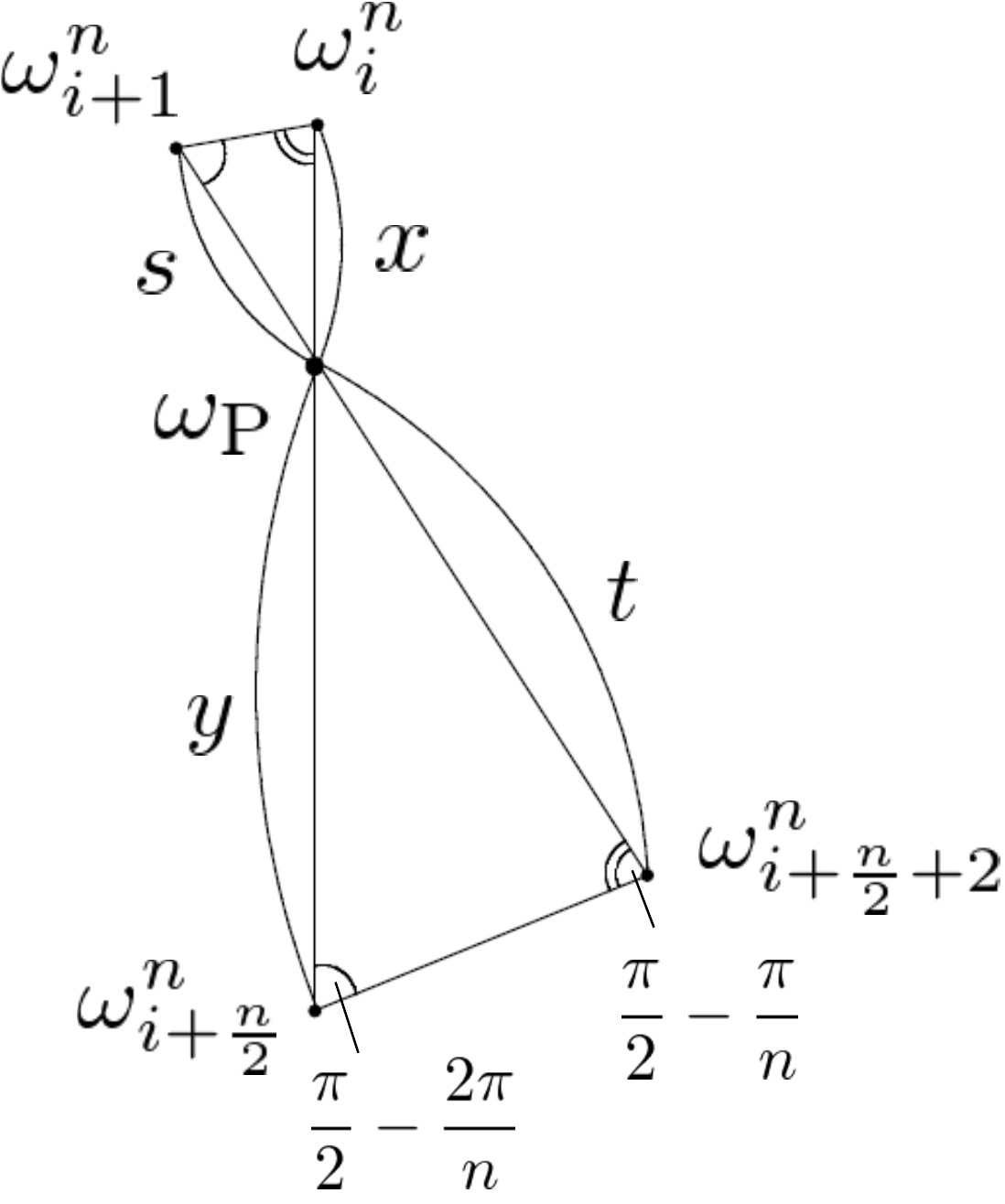}
	\end{minipage}
	\hfill\null
	\caption{}
	\label{fig:proof_even_app}
\end{figure}
\begin{align*}
\frac{x}{\sin(\frac{\pi}{2}-\frac{2\pi}{n})}=\frac{s}{\sin(\frac{\pi}{2}-\frac{\pi}{n})},\ \ 
\frac{y}{\sin(\frac{\pi}{2}-\frac{\pi}{n})}=\frac{t}{\sin(\frac{\pi}{2}-\frac{2\pi}{n})},
\end{align*}
namely
\begin{align*}
\frac{s}{s+t}=\frac{x}{x+(\frac{\cos\frac{2\pi}{n}}{\cos\frac{\pi}{n}})^{2}\ y}.
\end{align*}

For the latter, from figure \ref{fig:proof_odd1_app}, we obtain
\begin{figure}[h]
	\hfill
	\begin{minipage}[b]{0.45\linewidth}
		\centering
		\includegraphics[scale=0.43]{proof_odd1.pdf}
	\end{minipage}
	\hfill
	\begin{minipage}[b]{0.45\linewidth}
		\centering
		\includegraphics[scale=0.44]{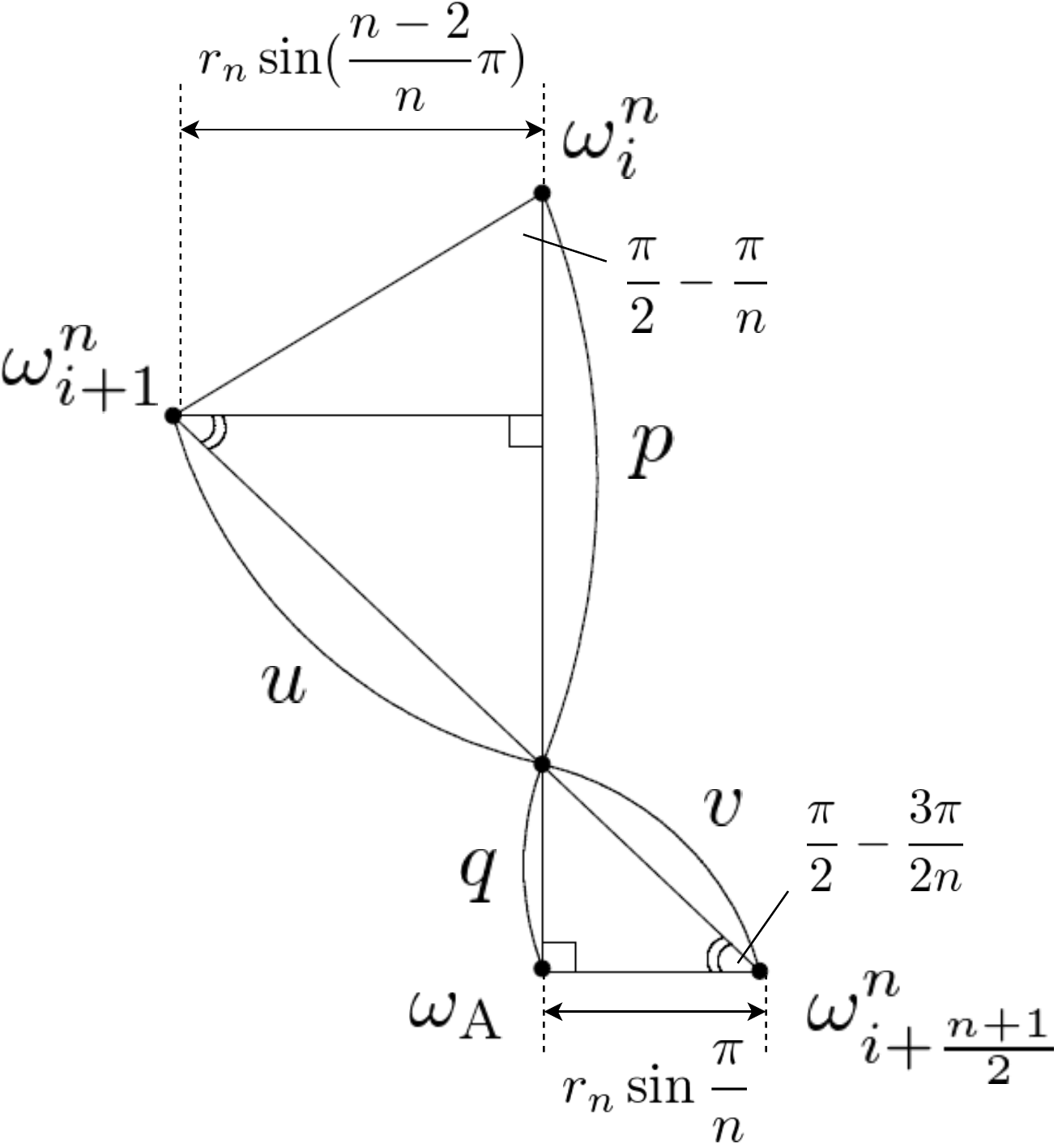}
	\end{minipage}
	\hfill\null
	\caption{}
	\label{fig:proof_odd1_app}
\end{figure}
\begin{align*}
\frac{u}{v}=\frac{r_{n}\sin(\frac{n-2}{n}\pi)}{r_{n}\sin\frac{\pi}{n}}
=\frac{\sin\frac{2\pi}{n}}{\sin\frac{\pi}{n}}=2\cos\frac{\pi}{n},
\end{align*}
and by sine theorem, 
\begin{align*}
\frac{p}{\sin(\frac{\pi}{2}-\frac{\pi}{2n})}=\frac{u}{\sin(\frac{\pi}{2}-\frac{\pi}{n})},\ \ 
q=v\cos\frac{3\pi}{2n}
\end{align*}
hold. Therefore,
\begin{align*}
\frac{q}{p}&=\frac{v}{u}\cdot\frac{\cos\frac{\pi}{n}\cdot\cos\frac{3\pi}{2n}}{\cos\frac{\pi}{2n}}\\
&=\frac{\cos\frac{3\pi}{2n}}{2\cos\frac{\pi}{2n}}\\
&=\frac{1}{2}(4\cos^{2}\frac{\pi}{2n}-3)\\
&=\frac{1}{2}(2\cos\frac{\pi}{n}-1).
\end{align*}
\begin{figure}[h]
	\centering
	\includegraphics[scale=0.43]{proof_odd3.pdf}
	\caption{}
	\label{fig:proof_odd2_app}
\end{figure}
On the other hand, from figure \ref{fig:proof_odd2_app}, we obtain
\begin{align}
\label{eq:app3}
\begin{aligned}
\frac{z}{w}&=\frac{2r_{n}\cos^{2}\frac{\pi}{2n}-2r_{n}\cos^{2}(\frac{n-2j}{2n}\pi)}{2r_{n}\cos^{2}(\frac{n-2j}{2n}\pi)}\\
&=\frac{\cos^{2}\frac{\pi}{2n}-\sin^{2}\frac{j\pi}{n}}{\sin^{2}\frac{j\pi}{n}}\ \ \ \ \mbox{for}\ \ j=\frac{n\pm1}{4}.
\end{aligned}
\end{align}
Since
\begin{equation}
\label{eq:app4}
\begin{aligned}
\sin^{2}\frac{j\pi}{n}&=\frac{1}{2}(1-\cos\frac{2j\pi}{n})\\
&=\frac{1}{2}(1-\cos\frac{(n\pm1)\pi}{2n})\\
&=\frac{1}{2}(1\pm\sin\frac{\pi}{2n}),
\end{aligned}
\end{equation}
the equation above can be written as
\begin{align*}
\frac{z}{w}&=\frac{2(1+\sin\frac{\pi}{2n})(1-\sin\frac{\pi}{2n})-(1\pm\sin\frac{\pi}{2n})}{1\pm\sin\frac{\pi}{2n}}\\
&=1\mp2\sin\frac{\pi}{2n},
\end{align*}
where the double sign corresponds to the ones in \eqref{eq:app3} and \eqref{eq:app4}, and the upper and lower sign correspond to the case of $n\equiv3$ and $n\equiv1$ (mod 4) respectively. Substituting these results to \eqref{eq:3} and \eqref{eq:4}, we obtain
\begin{align*}
S(\omega_{\mathrm{A}})
&=(\frac{2\cos\frac{\pi}{n}+1}{2})\{H(\frac{1}{2\cos\frac{\pi}{n}+1})-H(\frac{2}{2\cos\frac{\pi}{n}+1})\} \\
&=(2\mp2\sin\frac{\pi}{2n})\{H(\frac{1}{2})-H(\frac{1}{2\mp2\sin\frac{\pi}{2n}})\},
\end{align*}
which means \eqref{eq:app2}.

\bibliographystyle{unsrt} 
\bibliography{ref}

\end{document}